\newtheorem{theorem}{Theorem}
\newtheorem{lemma}{Lemma}
\newtheorem{claim}{Claim}
\newtheorem{remark}{Remark}
\newtheorem{definition}{Definition}
\newcommand{\calB}{\mathcal{B}}
\newcommand{\calC}{\mathcal{C}}
\newcommand{\calX}{\mathcal{X}}
\newcommand{\calY}{\mathcal{Y}}
\newcommand{\bfy}{\mathbf{y}}
\newcommand{\bfz}{\mathbf{z}}
\newcommand{\bbF}{\mathbb{F}}
\begin{document}

\title{Communication complexity of entanglement assisted multi-party computation}


\author{ 
  \IEEEauthorblockN{Ruoyu Meng and Aditya Ramamoorthy}
  \IEEEauthorblockA{Department of Electrical and Computer Engineering\\ 
                    Iowa State University, Ames, IA , USA\\
                    Email: \{rmeng, adityar\}@iastate.edu}
}


\maketitle

\begin{abstract}
We consider a quantum and a classical version of a multi-party function computation problem with \(n\) players, where players \(2, \dots, n\) need to communicate appropriate information to player 1, so that a ''generalized'' inner product function with an appropriate promise can be calculated. In the quantum version of the protocol, the players have access to entangled qudits but the communication is still classical. The communication complexity of a given protocol is the total number of classical bits that need to be communicated. When \(n\) is prime and for our chosen function, we exhibit a quantum protocol (with complexity $(n-1) \log n)$ bits) and a classical protocol (with complexity $((n-1)^2 (\log n^2)$ bits). 
We present an integer linear programming formulation for determining a lower bound on the classical communication complexity. This demonstrates that our quantum protocol is strictly better than classical protocols.
\end{abstract}

\section{\label{sec:level1}Introduction}
\label{sec:intro}
We consider a multi-party function computation scenario in this work. There are a total of $n$ players in the system numbered $1, 2, \dots, n$. Each player observes her input and players $2, \dots, n$ (remote parties) communicate an appropriate number of bits that allows player 1 to finally compute the value of the function. Clearly, this can be accomplished if players $2, \dots, n$ send their inputs but in fact in many cases, the function value can be computed with much lesser information. Thus, a natural question is to understand the minimum number of bits the remote parties need to send to player 1.

Such problems are broadly studied under the umbrella of communication complexity \cite{Yao79,Yao93} in the literature. In this work we consider the zero-error version of this problem. Our main goal is to understand the advantage that the availability of quantum entanglement confers on this problem and comparing it with classical protocols. Such problems have a long history in the literature \cite{HHHH09,BCMW10}.

{\bf Background:} Within quantum communication complexity (QCC) problems, there are three kinds of quantum protocols. In the first kind (introduced by Yao \cite{Yao93}) each player communicates via a quantum channel and the metric is the number of qubits transmitted. We call it the quantum transmission model. The second variation assumes that each player can use entanglement as a free resource but the communication is classical; the metric is the number of classical bits transmitted. We call it the entanglement model. It was introduced by Cleve and Buhrman \cite{CB97}. The third kind is a combination of the first two. We call it the combined model. It allows free usage of entanglement and works with quantum communication. The work of de Wolf \cite{W02} shows that, in the two party case, the entanglement model 
can be reduced to the quantum transmission model with a factor of two penalty using teleportation \cite{BBCJPW93}.

Buhrman, Cleve, Wigderson \cite{BCW98} and Cleve, van Dam, Nielsen and Tapp \cite{CDNT13} considered the case of the two party function computation with quantum communication and used reduction techniques to connect problems in QCC to other known problems and derived upper/lower bounds for QCC in this manner. In particular, the first work \cite{BCW98} showed examples, such as set disjointness function, where quantum protocols are strictly better than classical ones in the bounded-error setting. Here, the set-disjointness problem is such that each player has a set and wants to decide if their intersection is empty. Buhrman and de Wolf \cite{BW01} generalized the two-party "log rank" lower bound of classical communication complexity to QCC where quantum protocols use both shared entanglement and quantum communication.  
For other two-party upper/lower bound techniques, see \cite{R03,K07,VH02,MT20,LS22}. 

{\bf Related Work:}  Now we discuss works in multiparty quantum communication complexity. There are mainly two kinds of models. The number-in-hand (NIH) model assumes each player observes only one variable. The number-on-forehead (NOF) model assumes each player observes all but one variable.   Fran{\c{c}}ois and Shogo\cite{FS18} considered the NIH model with quantum communication and gave a quantum protocol for a three-party triangle-finding problem; the formulation considers bounded error. This has polynomial advantage with respect to any classical protocol. 
Here, the triangle-finding problem is such that the edge set of a graph is distributed over each user and the task is to find a triangle of the graph. 

The results in next two works hold for both NIH and NOF models.
Lee  and Schechtman  and Shraibman \cite{LSS09}  proved a Grothendieck-type inequality and then derived a general lower bound of the multiparty QCC for Boolean function in Yao's model. Following this work, Briet, Buhrman, Lee, Vidick \cite{BBLV09} showed a similar inequality for the multiparty XOR game and proved that the discrepancy method lower bounds QCC when the combined is of the third kind discussed earlier. 

Buhrman, van Dam,  H\o{}yer,  Tapp \cite{BVHT99} considered the NIH model with shared entanglement and proposed a three-party problem  with a quantum protocol that is better than any classical protocol by a constant factor. Following this work, Xue, Li, Zhang, Guo \cite{XLZG01} and Galv\~ao \cite{G01} showed  similar results under the same function with more restrictions.  
The work most closely related to our work is by Cleve and Buhrman \cite{CB97}. This paper considered the case of three players denoted Alice, Bob and Carol who have $m$-bit strings denoted $\vec{x},\vec{y}$ and $\vec{z}$ respectively. The strings are such that $\vec{x}+\vec{y}+\vec{z} = \mathds{1}$, i.e., their binary sum (modulo-2) is the all-ones vector. The goal is for Alice to compute 
\begin{align*}
    g(x,y,z) &= \sum_{i=1}^m x_i y_i z_i
\end{align*}
using binary arithmetic. We note that the communication from Bob and Carol to Alice is purely classical; however, they can use entanglement in a judicious manner.
For this particular function \cite{CB97} shows that a classical protocol (without entanglement) requires three bits of communication, whereas if the parties share $3m$ entangled qubits, then two bits of communication are sufficient.

{\bf Main Contributions:}
In this work, we consider a significant generalization of the original work of \cite{CB97}. In particular, we consider a scenario with $n$ players (for prime $n$) that observe values that lie in a higher-order finite field, with a more general promise that is satisfied by the observed values. As we consider more players and higher-order finite fields, the techniques used in the original work are not directly applicable in our setting.

Our work makes the following contributions.
\begin{itemize}
    \item We demonstrate a quantum protocol that allows for the function to be computed with  $(n-1) \log n$ bits. We use the quantum Fourier Transform as a key ingredient.
    \item On the other hand, we demonstrate a classical protocol that requires the communication of $(n-1)^2 (\log n^2)$ bits.
    \item To obtain a lower bound on the classical communication complexity, we define an appropriate integer linear programming problem that demonstrates that our quantum protocol is strictly better than any classical protocol.
\end{itemize}

This paper is organized as follows. Section \ref{sec:problem_form} discusses the problem formulation and Section \ref{sec:proposed_scheme} discusses our quantum protocol. Sections \ref{sec:classical_protocol} and \ref{sec:classical_lower_bound_sec} discuss our classical protocol and the lower bound on any classical protocol respectively.

\section{Problem formulation}
\label{sec:problem_form}
\subsection{Classical/Quantum Communication Scenarios}
Let $\calX_i, i = 1, \dots, n$ and $\calY$ denote sets in which the inputs and the output lie and $f(x_1,\dots,x_{n}):\calX_1\times\dots\times \calX_{n} \mapsto \calY$ be a multivariate function. There are $n$ players such that $i$-th player is given $x_i\in \calX_i$. The first player (henceforth, Alice) receives information from each of the players and this communication should allow her to compute $f(x_1,\dots,x_n)$. The players are not allowed to communicate with each other.

In the classical protocol, players $2$ to $n$ communicate to Alice via classical channels. In the quantum protocol, we assume that the users have shared entanglement as a free resource; however, the communication is still classical. In both scenarios the classical/quantum communication complexity is the least possible number of classical bits transmitted such that Alice can compute the function among all classical/quantum protocols.

\subsection{Generalized Inner Product Function with a Promise}
\label{sec:promise}
In this work we consider a specific multivariate function and the setting where $n \ge 3$ (number of players) is prime. Let $\mathbb{F}_n$ denote the finite field of order-$n$ and $[m] \triangleq \{1,\dots,m\}$.
The $i$-th player is given a vector $\vec{x}^i=[x^i_1 \dots x^i_m]^T \in \mathbb{F}_n^m$, i.e., each $ x^i_j \in \mathbb{F}_n$. The vectors satisfy the following ``promise'': $\forall j\in[m],$ the $j$-th component of each player's vector is s.t.
\begin{align*}
[x^1_j,\dots ,x^n_j]^T \in \{& a[1,\dots ,1]^T+ b[0,\dots,n-1]^T ~|~ a,b\in \mathbb{F}_n\},    
\end{align*}
i.e., $[x^1_j,\dots ,x^n_j]^T$ lies in a two-dimensional vector space spanned by the basis vectors $[1,\dots ,1]^T$ and $[0,1,\dots,n-1]^T$. In this case, it can be observed that $[x^1_j,\dots ,x^n_j]^T$ is either a multiple of the all-ones vector (if $b=0$) or a permutation of $[0, 1, \dots,n-1]$ (if $b \neq 0$). 
The function to be computed is the generalized inner product function given by
\begin{align}
\label{eqn:GIP}GIP(\vec{x}^1,\dots ,\vec{x}^n) = \sum_{i=1}^m \left( \prod_{j=1}^nx_i^j \right),
\end{align}
where the operations are over $\mathbb{F}_n$.

\section{Proposed Quantum Protocol}
\label{sec:proposed_scheme}
We first discuss the entangled states and unitary transforms will be used in the proposed quantum protocol in Section \ref{subsec:quantum_setting}. In Section \ref{subsec:quantum_protocol}, we discuss the quantum protocol with a proof of correctness in detail. A word about notation. In what follows for complex vectors $\vec{u},\vec{v}$, $\braket{\Vec{u},\Vec{v}} = \sum_{i} u^\dagger_i v_i$ denotes the usual inner product. On the other hand if $\vec{u},\vec{v}\in \mathbb{F}_n^m$, then $\braket{\vec{u},\vec{v}} = \sum_{i=1}^m u_i v_i$ denotes the inner product over $\mathbb{F}_n$. Moreover, $\delta_{ij}$ denotes the Kronecker delta function which equals 1 if $i=j$ and 0 otherwise. All logarithms in this paper are to the base-2.

\subsection{Entanglement Resource and Unitary Transforms Used\label{subsec:quantum_setting}}

\paragraph{Shared Entangled States.}Consider $n$ isomorphic $n$-dimensional quantum systems, where each system has a computational basis denoted $\calB = \{\ket{0},\ket{1},\dots, \ket{n-1}\}$.  There are $m$ entangled states shared among $n$ players. For $i\in[m]$, prepare the entangled state 
\begin{align}
\ket{\Phi_i}:=\frac{1}{\sqrt{n}}\sum_{k=0}^{n-1}\ket{k\dots k}. \label{eq:entangled_state_defn}    
\end{align}
The $j$-th subsystem of this entangled state is given to $j$-th player for $j = 1, \dots, n$.

\paragraph{Quantum Fourier Transform.} Let $\omega:=e^{\frac{2\pi \textrm{i}}{n}}$ denote the $n$-th root of unity. The Quantum Fourier Transform (QFT) is the following unitary map that takes
\begin{equation}\label{eqn:QFT}
\ket{j}\mapsto \frac{1}{\sqrt{n}}\sum_{k=0}^{n-1}  \omega^{jk}\ket{k}, ~\forall \ket{j} \in \calB.    
\end{equation}
Let $QFT^{\otimes l}$ denote   QFT performed over $l$ isomorphic systems.

\paragraph{Phase Shift Map.}
For $j\in \mathbb{F}_n$, we define
\begin{align}
 &P_0 \triangleq \begin{cases}
\ket{0}\mapsto \omega^{-\frac{n-1}{2}}\ket{0}\\ \nonumber
\ket{i}\mapsto\ket{i}, i\ne 0. \nonumber
\end{cases}\\
\text{If }j\ne 0,~ &P_j \triangleq 
\ket{i}\mapsto \omega^{-\frac{1}{n} (ij\text{ mod } n)}\ket{i}. \label{eq:phase_shift_map}
\end{align}

\begin{algorithm}[t]
\label{alg:quantum_protocol}
\begin{algorithmic}
\STATE{For $i\in\{1,\dots,m\}$, prepare maximally entangled ``shared state'' $\ket{\Phi_i}$ ({\it cf.} \eqref{eq:entangled_state_defn}) and distribute corresponding subsystems to all players. }
 \FOR{player $p\in\{1,\dots n\}$ } 
    \FOR{each $i\in\{1,\dots, m\}$}
    \STATE {Assume $x_i^p=j$, then player $p$ applies $P_j$ ({\it cf.} \eqref{eq:phase_shift_map}) on her part of $\ket{\Phi_i}$.} 
    \STATE {Player $p$ performs $QFT$ on her part of the shared state.} 
    \STATE {Player $p$ measures her part of the shared state in the computational basis, yielding $s_i^p\in \bbF_n$} 
    \ENDFOR
    \STATE {$s^p \leftarrow \sum_{i=1}^m s_i^p$} 
    \STATE {Player $p$ sends $s^p$ to Alice if $p\ne 1$} 
 \ENDFOR
 \STATE {$GIP(\vec{x}^1,\dots ,\vec{x}^n) =\sum_p s^p.$} 
\end{algorithmic}  
\caption{Proposed Quantum protocol}
\end{algorithm}
\subsection{The Quantum Protocol\label{subsec:quantum_protocol}}
Next, we introduce the quantum protocol.
\begin{theorem}
There exists a quantum protocol for computing $GIP(\vec{x}^1,\dots ,\vec{x}^n)$ that uses $(n-1) \log n$ bits.     
\end{theorem}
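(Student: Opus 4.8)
The plan is to establish correctness of the protocol in Algorithm~1 and then read off the bit count. Fix a coordinate $i\in[m]$ and abbreviate $v_p:=x_i^p$. I would track the shared state $\ket{\Phi_i}$ through the three operations each player performs (phase shift, then $QFT$, then measurement) and show the outcomes satisfy $\sum_{p=1}^n s_i^p=\prod_{p=1}^n x_i^p$ in $\bbF_n$. Granting this, $\sum_p s^p=\sum_i\sum_p s_i^p=\sum_i\prod_p x_i^p=GIP$, and since each of the $n-1$ remote players transmits a single element $s^p\in\bbF_n$ (i.e. $\log n$ bits), the total communication is $(n-1)\log n$.

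After every player $p$ applies $P_{v_p}$ to her share, the state is $\frac{1}{\sqrt n}\sum_{k=0}^{n-1}c_k\ket{k\dots k}$ with $c_k=\prod_{p=1}^n\phi_p(k)$ the product of the phases applied to $\ket{k}$. The first substantive step is to evaluate $c_k$ using the promise, which forces $[v_1,\dots,v_n]^T=a[1,\dots,1]^T+b[0,1,\dots,n-1]^T$: when $b=0$ all $v_p$ equal $a$, and when $b\ne 0$ the $v_p$ form a permutation of $\bbF_n$, so exactly one player holds $0$ and the remaining $n-1$ players hold $\{1,\dots,n-1\}$.

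Evaluating $c_k$ in each case is the heart of the argument and the part I expect to demand the most care. For $b=0,\,a\ne 0$ there are $n$ identical factors $\omega^{-\frac1n(ka\bmod n)}$, so the $\frac1n$ cancels and $c_k=\omega^{-ka}$. For $b=0,\,a=0$ the only nontrivial factor sits at $k=0$, and since $n$ is an odd prime one checks $\omega^{-n(n-1)/2}=1$, giving $c_k\equiv 1$. For $b\ne 0$ the lone $P_0$ contributes $\omega^{-(n-1)/2}$ at $k=0$ and $1$ elsewhere, while the other factors multiply to $\omega^{-\frac1n\sum_{v=1}^{n-1}(kv\bmod n)}$; here the identity $\sum_{v=1}^{n-1}(kv\bmod n)=\frac{n(n-1)}2$ for every $k\ne 0$ — valid because $k$ is invertible in the field, so $kv\bmod n$ permutes $\{1,\dots,n-1\}$ — makes $c_k\equiv\omega^{-(n-1)/2}$ constant in $k$.

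Finally I would apply $QFT$ to each subsystem and extract the amplitude of a measurement outcome $(s_1,\dots,s_n)$, which is proportional to $\sum_{k=0}^{n-1}c_k\,\omega^{k\sum_p s_p}$. By character orthogonality — $\sum_{k=0}^{n-1}\omega^{kt}$ equals $n$ if $n\mid t$ and $0$ otherwise — the outcome distribution is supported exactly on tuples whose coordinate sum $\sum_p s_p$ hits the relevant target: $a$ in the case $b=0,\,a\ne 0$, and $0$ in the cases $b=0,\,a=0$ and $b\ne 0$. Comparing against Fermat's little theorem $a^n=a$ in $\bbF_n$ shows this target is precisely $\prod_p v_p$ in every case ($a=a^n$ when $b=0$, and $0$ when $b\ne 0$ since some $v_p=0$). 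Hence $\sum_p s_i^p=\prod_p x_i^p$ for each coordinate, and summing over coordinates closes the proof. Primality of $n$ enters in exactly two ways: nonzero field elements are invertible (the permutation and cancellation arguments) and $n$ is odd (so $(n-1)/2\in\bbZ$ and $\omega^{-n(n-1)/2}=1$).
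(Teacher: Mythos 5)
Your proposal is correct and follows essentially the same route as the paper: phase kickback with a case analysis driven by the promise (identical to the paper's equations \eqref{eqn:computej} and \eqref{eqn:computeaj+b0...n-1}, including the $\omega^{-n(n-1)/2}=1$ and $\sum_{v=1}^{n-1}(kv\bmod n)=\frac{n(n-1)}{2}$ computations), followed by the QFT character-orthogonality step, which is precisely the paper's Lemma~\ref{lemma:qft_lemma} stated inline, and Fermat's little theorem to identify the target sum with $\prod_p x_i^p$. The only cosmetic differences are that you handle an arbitrary permutation in the $b\ne 0$ case directly (the product of phases depends only on the multiset of inputs) where the paper reduces to $[0,1,\dots,n-1]^T$ ``by symmetry,'' and you fold the paper's standalone lemma into the orthogonality calculation.
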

In our protocol (see Alg.~\ref{alg:quantum_protocol}), for each $i = 1, \dots, m$, each player $p$ examines $x_i^p$ and applies the corresponding phase shift map to her subsystem of $\ket{\Phi_i}$. Following this, she applies the QFT on each of her symbols and then measures in the computational basis; this yields $s_i^p \in \bbF_n$ for $i = 1,\dots, m$.  Player $p$ then transmits $\sum_{i=1}^m s_i^p$. 
As players $2 \leq p \leq n$ transmit a symbol from $\bbF_n$, it is clear that the total communication in the protocol is $(n-1)( \log n)$.

To show correctness of the protocol, we need the following auxiliary lemma. The proof appears in Appendix \ref{appendix:proof_lemma_1}.
\begin{lemma}
Let  $\vec{\alpha}=[1,\dots,1]^T\in\mathbb{F}_n^n$. Then, for each $x\in \mathbb{F}_n$, we have
    \label{lemma:qft_lemma}
\begin{align}
    QFT^{\otimes n}\left(\frac{1}{\sqrt{n}}\sum_{j=0}^{n-1} \omega^{-jx}\ket{j\cdot \vec{\alpha}}\right) = \frac{1}{n^{\frac{n}{2}}}\sum_{\vec{k}\in\{0,\dots,n-1\}^n}a_{\vec{k}}\ket{\vec{k}}.
\end{align}
Then the amplitude $a_{\vec{k}}\ne 0$ iff $\sum_{j=1}^nk_j=x$ where $\vec{k}=[k_1,\dots,k_n]^T$.  
\end{lemma}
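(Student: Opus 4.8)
The plan is to compute the left-hand side directly, exploiting the tensor-product structure of the input state together with the action of the QFT on each factor. First I would observe that $\ket{j\cdot\vec{\alpha}}=\ket{j}^{\otimes n}$, so that the input is the ``diagonal'' superposition $\frac{1}{\sqrt{n}}\sum_{j=0}^{n-1}\omega^{-jx}\ket{j}^{\otimes n}$. Applying the single-system rule \eqref{eqn:QFT} to each of the $n$ factors and using linearity of $QFT^{\otimes n}$, each term $\ket{j}^{\otimes n}$ maps to $\frac{1}{n^{n/2}}\sum_{\vec{k}}\omega^{j(k_1+\dots+k_n)}\ket{\vec{k}}$, where $\vec{k}$ ranges over $\{0,\dots,n-1\}^n$; the combined phase is $\omega^{j\sum_\ell k_\ell}$ because the QFT contributes a factor $\omega^{jk_\ell}$ on subsystem $\ell$.

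Next I would interchange the two finite sums (over $j$ and over $\vec{k}$) and collect the coefficient of each basis ket $\ket{\vec{k}}$. After matching the stated $n^{-n/2}$ normalization on the right-hand side, this isolates the amplitude as a single-variable exponential sum, namely $a_{\vec{k}}\propto\sum_{j=0}^{n-1}\omega^{j(\sum_\ell k_\ell-x)}$.

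The key computational step is then to evaluate this geometric sum. Since $\omega$ is a primitive $n$-th root of unity, the standard additive-character orthogonality relation gives $\sum_{j=0}^{n-1}\omega^{jt}=n$ when $t\equiv 0\pmod n$ and $0$ otherwise. Applying this with $t=\sum_\ell k_\ell-x$ shows that $a_{\vec{k}}\neq 0$ precisely when $\sum_{\ell=1}^n k_\ell\equiv x\pmod n$, i.e.\ when $\sum_{j=1}^n k_j=x$ over $\bbF_n$, which is exactly the claim.

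I do not anticipate a substantive obstacle; the only point requiring care is bookkeeping the three normalization constants---the $1/\sqrt{n}$ from the input, the $n^{-n/2}$ from the $n$-fold QFT, and the factor $n$ produced by the non-vanishing character sum---so that the surviving amplitudes are consistent with the stated $n^{-n/2}$ prefactor. As a sanity check I would confirm that each non-zero $a_{\vec{k}}$ equals $\sqrt{n}$, so that the output is a uniform superposition over the $n^{n-1}$ tuples satisfying $\sum_\ell k_\ell=x$ and is correctly normalized, since $n^{n-1}\cdot(\sqrt{n}/n^{n/2})^2=1$.
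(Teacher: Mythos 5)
Your proposal is correct and takes essentially the same route as the paper's own proof in Appendix~A: expand $QFT^{\otimes n}$ on the diagonal superposition $\frac{1}{\sqrt{n}}\sum_{j}\omega^{-jx}\ket{j}^{\otimes n}$, read off the amplitude $a_{\vec{k}}=\frac{1}{\sqrt{n}}\sum_{j=0}^{n-1}\omega^{j(\sum_{\ell}k_{\ell}-x)}$, and evaluate the resulting geometric sum, which vanishes unless $\sum_{\ell}k_{\ell}\equiv x \pmod{n}$ and equals $\sqrt{n}$ otherwise. The only cosmetic difference is that you invoke additive-character orthogonality (correctly tracking the congruence mod $n$) where the paper explicitly writes the sum as $\frac{1}{\sqrt{n}}\cdot\frac{1-\omega^{n(-x+\sum_{\ell}k_{\ell})}}{1-\omega^{-x+\sum_{\ell}k_{\ell}}}=0$; your normalization sanity check agrees with the paper's value $a_{\vec{k}}=\sqrt{n}$ for the surviving terms.
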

The proof of correctness of the protocol hinges on the following lemma.
\begin{lemma}
\begin{equation}\label{relation1}
\sum_{p=1}^ns_i^p = \prod_{p=1}^nx_i^p, \text{for $i=1, \dots, m$}.
\end{equation}    
\end{lemma}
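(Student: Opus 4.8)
The plan is to fix an index $i \in [m]$ and track how the shared state $\ket{\Phi_i}$ evolves under the diagonal phase-shift maps followed by the QFTs, and then invoke Lemma~\ref{lemma:qft_lemma} with the target value $x = \prod_{p=1}^n x_i^p$. Since each $P_j$ is diagonal in the computational basis, the phase shifts preserve the ``all-equal'' structure of $\ket{\Phi_i}$: the basis state $\ket{k\cdot\vec\alpha}=\ket{k\dots k}$ acquires a scalar $\Phi(k)=\prod_{p=1}^n\phi_p(k)$, where $\phi_p(k)$ is the phase player $p$ contributes to $\ket{k}$ via $P_{x_i^p}$. Hence after the phase shifts, but before the QFTs, the state is $\frac{1}{\sqrt n}\sum_{k=0}^{n-1}\Phi(k)\ket{k\cdot\vec\alpha}$. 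The crux is to prove that $\Phi(k)=c\,\omega^{-k\prod_p x_i^p}$ for some $k$-independent constant $c$; once this holds, the state matches the hypothesis of Lemma~\ref{lemma:qft_lemma} (with $x=\prod_p x_i^p$) up to an irrelevant global phase, and the lemma then guarantees that every $\ket{\vec k}$ with nonzero amplitude after $QFT^{\otimes n}$ satisfies $\sum_{p=1}^n k_p=\prod_p x_i^p$. The measurement therefore returns $[s_i^1,\dots,s_i^n]$ with $\sum_p s_i^p=\prod_p x_i^p$ deterministically.

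To compute $\Phi(k)$, I would read $\phi_p(k)$ off the definitions of the maps: for $k\ne 0$ one checks $\phi_p(k)=\omega^{-\frac1n(k x_i^p \bmod n)}$ uniformly (this also covers $x_i^p=0$, since $P_0$ fixes $\ket{k}$ for $k\ne0$), while for $k=0$ only the players with $x_i^p=0$ contribute, each supplying the special factor $\omega^{-(n-1)/2}$ from $P_0$. I would then split according to the promise. When $b=0$, all coordinates equal a common $a$, so for $k\ne0$ the sum $\frac1n\sum_p(ka \bmod n)=(ka\bmod n)$ collapses the $\frac1n$ and gives $\Phi(k)=\omega^{-ka}$; the decisive step is Fermat's little theorem, $a^n\equiv a\pmod n$, which lets me rewrite $\omega^{-ka}=\omega^{-k a^n}=\omega^{-k\prod_p x_i^p}$. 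When $b\ne0$, the coordinates $\{x_i^p\}$ form a permutation of $\{0,\dots,n-1\}$, so $\prod_p x_i^p=0$ and I must show $\Phi(k)$ is constant in $k$; for $k\ne0$ the map $y\mapsto ky$ permutes $\mathbb{F}_n$, so $\sum_p(k x_i^p\bmod n)=\sum_{r=0}^{n-1}r=\frac{n(n-1)}{2}$ and $\Phi(k)=\omega^{-(n-1)/2}$, while the single zero coordinate forces $\Phi(0)=\omega^{-(n-1)/2}$ as well.

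The main obstacle is verifying that the two special design choices in the maps conspire correctly: the $\frac1n$ factor in $P_j$ is cancelled only because the relevant residues are summed over all $n$ players and add to a multiple of $n$ (either $n$ copies of one residue, or a complete residue system summing to $\frac{n(n-1)}{2}$), and the $\omega^{-(n-1)/2}$ factor in $P_0$ is exactly what makes the $k=0$ term carry the same phase as the $k\ne0$ terms in the permutation case, and vanish modulo $n$ when all coordinates are zero. Here I would use that $n$ is an odd prime, so $(n-1)/2$ is an integer and $\omega^{-n(n-1)/2}=1$. Once these identities are in place, $\Phi(k)=c\,\omega^{-k\prod_p x_i^p}$ follows in every case, and the proof closes by applying Lemma~\ref{lemma:qft_lemma}, since a global phase does not affect the measurement statistics.
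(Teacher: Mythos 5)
Your proposal is correct and follows essentially the same route as the paper's proof: accumulate the diagonal phases on $\ket{k\cdot\vec{\alpha}}$, split cases according to the promise (for $b=0$ use Fermat's little theorem $a^n\equiv a \pmod n$; for $b\ne 0$ use that $\{kx_i^p \bmod n\}$ is a complete residue system summing to $\tfrac{n(n-1)}{2}$, with the $P_0$ factor $\omega^{-(n-1)/2}$ aligning the $k=0$ term), and then invoke Lemma~\ref{lemma:qft_lemma} to conclude $\sum_p s_i^p = \prod_p x_i^p$. The only cosmetic difference is that you handle an arbitrary permutation directly via $\Phi(k)$, whereas the paper first reduces to the representative $[0,1,\dots,n-1]^T$ ``by symmetry'' before performing the identical computation in its Appendix.
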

\begin{proof}
The state jointly measured by each player is
$$ QFT^{\otimes n}\left(\sum_{j=0}^{n-1} \left( \otimes_{p=1}^n P_{x_i^p} \right) \frac{1}{\sqrt{n}}\ket{j\cdot \vec{\alpha}} \right),$$
where $\vec{\alpha}=[1,\dots,1]^T$. If $[x_i^1,\dots,x_i^n]^T = [j,\dots,j]^T$, then (see Appendix \ref{lemma2} for derivation)
\begin{equation}\label{eqn:computej}
    \begin{split}
P_j^{\otimes n}
\left(\frac{1}{\sqrt{n}}\sum_{k=0}^{n-1}\ket{k\cdot 
\vec{\alpha}} 
\right) 
\mapsto 
\frac{1}{\sqrt{n}}\sum_{k=0}^{n-1}\omega^{-k{j}}\ket{k\cdot \vec{\alpha}}.
    \end{split}
\end{equation}

Thus, $QFT^{\otimes n}\left( \frac{1}{\sqrt{n}}\sum_{k=0}^{n-1}\omega^{-kj}\ket{k\cdot \vec{\alpha}} \right)$ has non-zero coefficients only for states $\ket{\vec{k}}$ such that $\sum_{l=1}^n k_l=j$ by \textbf{Lemma 1}.
Therefore, the measurement result $[s_i^1,\dots,s_i^n]^T$ must be one of $\vec{k}=[k_1,\dots,k_n]^T$ s.t. 
$$
\sum_{l=1}^nk_l=j \overset{(a)}{=} j^n=\prod_{p=1}^nx_i^p
$$
where $(a)$ follows from the fact that $j \in \mathbb{F}_n$. 

Now assume 
$[x^1_j,\dots ,x^n_j]^T = a[1,\dots ,1]^T+b[0,1,\dots,n-1]^T$ with $b\ne 0$. This implies that $a+b\cdot i\in\bbF_n$ is distinct for each $i\in\{0,\dots,n\}$. It can be observed that $a[1,\dots,1]+b[0,\dots,n-1]$ is a permutation of $[0,\dots,n-1]$, so it suffices to discuss $[x_i^1,\dots,x_i^p]^T = [0,\dots,n-1]^T$ by symmetry.
We have that (see Appendix \ref{lemma2} for derivation)
\begin{equation}\label{eqn:computeaj+b0...n-1}
    \begin{split}&P_0\otimes \dots \otimes P_{n-1}\left(\frac{1}{\sqrt{n}}\sum_{k=0}^{n-1}\ket{k\cdot \vec{\alpha}} \right) \mapsto\frac{1}{\sqrt{n}} \omega^{-\frac{n-1}{2}}\sum_{k=0}^{n-1}\ket{k\cdot \vec{\alpha}} 
    \end{split}
\end{equation}
It follows that
\begin{align*}
&QFT^{\otimes n}\left(\frac{1}{\sqrt{n}}\omega^{-\frac{n-1}{2}}\sum_{k=0}^{n-1}\ket{k\cdot \vec{\alpha}}\right)\\&=\omega^{-\frac{n-1}{2}}QFT^{\otimes n}\left(\frac{1}{\sqrt{n}}\sum_{k=0}^{n-1}\ket{k\cdot \vec{\alpha}}\right)
=\frac{1}{n^{\frac{n}{2}}}\omega^{-\frac{n-1}{2}}\sum_{\vec{k}\in\bbF_n^n}a_{\vec{k}}\ket{\vec{k}}.    
\end{align*}
By \textbf{Lemma 1}, $a_{\Vec{k}}\ne 0$ iff  $\sum_{l=1}^nk_l=0$. The measurement result $[s_i^1,\dots,s_i^n]^T$ must be $\vec{k}=[k_1,\dots,k_n]^T$ such that 
$$
\sum_{l=1}^nk_l=0=\prod_{j=0}^{n-1}j=\prod_{p=1}^nx_i^p.
$$
\end{proof}

Now, we show that our protocol computes $GIP(\Vec{x}^1,\dots,\vec{x}^n)$ correctly. Since $s^p = \sum_i s_i^p$, by applying (\ref{relation1}), we have that
\begin{align*}
\sum_ps^p &= \sum_p\sum_i s_i^p= \sum_i\sum_p s_i^p\\&= \sum_i\prod_{p=1}^nx_i^p= GIP(\Vec{x}^1,\dots,\vec{x}^n).
\end{align*}

\section{Proposed Classical protocol}
\label{sec:classical_protocol}
We now move on to considering purely classical protocols for our problem, i.e., ones that do not consider entanglement. At the top-level our classical scheme operates by communicating the ``number'' of different symbols that exist in within each player's vector. We show that this suffices for Alice to recover the function value.

More precisely, let $\beta_k^p$ be the number of ``$k$'' values in the vector of $p$-th player; recall that player $p$ is assigned $\vec{x}_p = x_1^p\dots x_m^p$. Note that $\sum_{k=0}^{n-1}\beta_k^p=m$.
\begin{theorem}
\label{thm:classical_protocol}
    There exists a classical protocol for computing $GIP(\vec{x}^1,\dots ,\vec{x}^n)$ that uses $(n-1)^2 (\log n^2)$ bits.
\end{theorem}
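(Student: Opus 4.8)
The plan is to show that the symbol counts $\{\beta_k^p\}$ already determine $GIP$, and then tally the bits needed to transmit them. The starting observation exploits the promise: each coordinate's column $[x_i^1,\dots,x_i^n]^T$ is of exactly one of two types. If it is a constant column $j[1,\dots,1]^T$ (the case $b=0$), then $\prod_{p=1}^n x_i^p = j^n = j$ by Fermat's little theorem in $\mathbb{F}_n$; if it is a permutation of $\{0,\dots,n-1\}$ (the case $b\neq 0$), the product vanishes because the entry $0$ appears. Writing $c_k$ for the number of constant columns of value $k$ and $P$ for the number of permutation columns, this collapses the target to $GIP = \sum_{k=0}^{n-1} k\,c_k \pmod n$, so it suffices for Alice to learn the integer $\sum_k k c_k$ modulo $n$.

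The key identity I would establish next links the counts to $c_k$ and $P$. Each constant column of value $k$ contributes $1$ to $\beta_k^p$ for \emph{every} player $p$, whereas each permutation column is seen with value $k$ by exactly one player; summing over players therefore gives
\[
\sum_{p=1}^n \beta_k^p = n\,c_k + P, \qquad k = 0,\dots,n-1.
\]
Multiplying by $k$ and summing yields $S := \sum_k k \sum_p \beta_k^p = n\sum_k k c_k + P\,\tfrac{n(n-1)}{2}$, which is divisible by $n$ since $n$ is an odd prime (so $\tfrac{n-1}{2}\in\mathbb{Z}$). Hence $\tfrac{S}{n} = GIP_{\mathbb{Z}} + P\,\tfrac{n-1}{2}$ for the integer lift $GIP_{\mathbb{Z}} = \sum_k k c_k$, and reducing modulo $n$ gives $GIP \equiv \tfrac{S}{n} + 2^{-1}P \pmod n$. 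The last ingredient is that $P$ is recoverable modulo $n$: the boxed identity shows $\sum_p \beta_k^p \equiv P \pmod n$ for \emph{every} $k$, so Alice reads off $P \bmod n$ from any single column sum.

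The protocol is then immediate: each player $p\ge 2$ transmits its counts $\beta_k^p$ reduced modulo $n^2$; Alice adds her own (exact) counts to form $\sum_p \beta_k^p \bmod n^2$ for each $k$, assembles $S \bmod n^2$, divides by $n$ to get $\tfrac{S}{n}\bmod n$, extracts $P\bmod n$, and applies the formula. The step I expect to require the most care is the \emph{precision} argument: because the final expression divides $S$ by $n$, knowing the counts only modulo $n$ is insufficient, and one must retain them modulo $n^2$ so that the quotient $S/n$ is pinned down modulo $n$. This is exactly what fixes the per-symbol cost at $\log(n^2)=2\log n$ bits. Finally I would count the communication: the constraint $\sum_k \beta_k^p = m$ lets each remote player omit one count (Alice recovers it from $m$ modulo $n^2$), so each of the $n-1$ remote players sends $n-1$ residues in $\{0,\dots,n^2-1\}$, for a total of $(n-1)^2 \log(n^2)$ bits, as claimed.
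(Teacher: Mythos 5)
Your proposal is correct and follows essentially the same route as the paper's proof: transmitting the symbol counts $\beta_k^p$ modulo $n^2$, using the counting identity $\sum_{p=1}^n \beta_k^p = n\,c_k + P$ (which is exactly the paper's relation between $\beta_k^p$ and the column-type counts $m_{a,b}$), and exploiting the modulo-$n^2$ precision so that division by $n$ pins down the answer modulo $n$. The only difference is cosmetic: you recover $P \bmod n$ explicitly and correct by $2^{-1}P$ after dividing, whereas the paper folds the same correction into its statistic $W$ via the coefficient $\tfrac{n^2-n}{2}(n-1)$ on $\sum_p \beta_0^p$, which annihilates the permutation-column contribution modulo $n^2$.
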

In our protocol (see Alg.~\ref{alg:classical_protocol}), for each $i = 1, \dots, n-1$, each player $p$ transmits $\beta_k^p \text{ (mod}\ n^2\text{)}$. Alice computes each $\beta_0^p \text{ (mod}\ n^2\text{)}$ by using the fact that $\sum_{k=0}^{n-1}\beta_k^p=m$. Finally, Alice computes the value of the function by using $\{\beta_k^p \text{ (mod}\ n^2\text{)}|k\in\bbF_n,p\in[n]\}$
For each player $p\in\{2,\dots,n\}$, $p$  transmits $\{\beta_1^p\text{ (mod}\ n^2\text{)},\dots, \beta_{n-1}^p\text{ (mod}\ n^2\text{)}\}$.   The total number of bits transmitted is $(n-1)^2(\log n^2)$. The proof of the Theorem \ref{thm:classical_protocol} appears in Appendix \ref{appendix:proof_classical_protocol}.

\begin{algorithm}[t]
\label{alg:classical_protocol}
\begin{algorithmic}
\FOR{player $p\in\{1,\dots n\}$} 
    \FOR{each $k\in\bbF_n$} 
    \STATE{$\beta_k^p\leftarrow$ number of ``$k$'' values in $x_1^p\dots x_m^p$} 
    \IF{$p$ is not Alice \textbf{and} $k\ne 0$} 
        \STATE {$p$ sends $\beta_k^p\text{ (mod}\ n^2\text{)}$ to Alice} 
    \ENDIF
    \ENDFOR
\ENDFOR
\FOR{$p\in\{2,\dots n\}$} 
    \STATE{Alice computes $\beta_0^p\text{ (mod}\ n^2\text{)}$
    by using 
    $\sum_{k=0}^{n-1}\beta_k^p=m$.}
\ENDFOR
\STATE{
$W \leftarrow \sum_{p=1}^n\sum_{k=1}^{n-1}k\cdot \beta_k^p+ \frac{n^2-n}{2}(n-1) \sum_{p=1}^n \beta_0^p \text{ (mod}\ n^2\text{)}$}
\STATE{$GIP(\vec{x}^1,\dots ,\vec{x}^n) = W / n$}
\end{algorithmic}
\caption{Proposed Classical protocol}
\end{algorithm}

\section{Classical communication complexity lower bound}
\label{sec:classical_lower_bound_sec}
We now discuss a lower bound on the classical communication complexity   that demonstrates a strict separation between our proposed quantum protocol and any classical protocol. Analytically, this seems to be a rather hard problem, and we discuss it as an item for future work. We are able to show however, the strict separation numerically using ILPs (see Section \ref{sec:classical_lower_bd} below). In addition, we present an analytical argument below that demonstrates that for $n=3$, the communication cost of {\it any} classical protocol is at least $2 \log_2 3$. 

We assume that Alice, Bob, and Carol are given vectors $\vec{x}^1$, $\vec{x}^2$, and $\vec{x}^3$, respectively, each of length $m$. The promise ({\it cf.} Sec. \ref{sec:promise}) is equivalent to 
\begin{align}
x^1_j+x^2_j+x^3_j=0,~ \text{for~}j=1,\dots,m.    \label{eq:promise_for_three_users}
\end{align}
This implies that the GIP function in this case can be computed if we know any two out of $\vec{x}^1, \vec{x}^2$, and $\vec{x}^3$. We assume that Carol labels her sequences ($\vec{x}^3 \in \{0,1,2\}^m$) with one of at most three possible labels.
 We denote this label by a mapping $\beta: \{0,1,2\}^m \to \{0,1,2\}$. Recall that Alice knows $\vec{x}^1$. 
\begin{definition}
    We define Bob's confusion graph $G_B = (V_B, E_B)$ as follows. The vertex set $V_B$ corresponds to the $3^{m}$ sequences $\vec{x}^2 \in \{0,1,2\}^m$. The $i$-th such sequence is denoted $\vec{x}^2[i]$ for $i = 0, \dots, 3^m -1$, with similar notation for Alice and Carol's sequences.
    
     There exists an edge $(\vec{x}^2[i], \vec{x}^2[j]) \in E_B$, for $i\neq j$ if there exists an Alice sequence $\vec{x}^1[*]$ and Carol sequences $\vec{x}^3[a]$ and $\vec{x}^3[b]$ such that (i) $\beta(\vec{x}^3[a]) = \beta(\vec{x}^3[b])$ (note that we allow $a=b$), and (ii) $GIP(\vec{x}^1[*],\vec{x}^2[i],\vec{x}^3[a]) \neq  GIP(\vec{x}^1[*],\vec{x}^2[j],\vec{x}^3[b])$. 
\end{definition}
Note that if $(\vec{x}^2[i], \vec{x}^2[j]) \in E_B$, the Bob has to assign different labels to $\vec{x}^2[i]$ and $\vec{x}^2[j]$; otherwise, Alice has no way to compute the function with zero error. The idea of the confusion graph goes back to the work of Shannon \cite{S56}.

The main idea of the argument below is to show that there exists a triangle in $G_B$. This implies that the Bob needs to use at least three labels for Alice to decode with zero-error.

Since Carol uses at most three labels, then using the pigeon-hole principle there exist at least $3^{m-1}$ sequences that have the same Carol label. Let us denote this set $\calC$.
\begin{claim}
       There is a subset of two coordinates where all nine patterns $\{0,1,2\}^2$ appear within the sequences in $\calC$.
\end{claim}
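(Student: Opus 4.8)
The plan is to argue by contradiction using an entropy (Shearer-type) counting bound on the projections of $\calC$ onto pairs of coordinates. Suppose, for contradiction, that \emph{no} pair of coordinates realizes all nine patterns; equivalently, for every pair $\{i,j\}\subseteq[m]$ the projection $\pi_{ij}(\calC)=\{(x_i,x_j):\vec{x}\in\calC\}$ has at most $8$ distinct elements. I want to show that this forces $|\calC|<3^{m-1}$ once $m$ is large, contradicting the pigeonhole lower bound $|\calC|\ge 3^{m-1}$ stated just before the Claim.

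To make this quantitative I would let $\vec{X}=(X_1,\dots,X_m)$ be uniformly distributed on $\calC$ and let $H(\cdot)$ denote Shannon entropy (to base $2$), so that $H(\vec{X})=\log|\calC|\ge (m-1)\log 3$. Shearer's inequality applied to the family of all $\binom{m}{2}$ coordinate-pairs (each coordinate lying in exactly $m-1$ of them) gives $(m-1)H(\vec{X})\le \sum_{i<j}H(X_i,X_j)$. Dividing by $\binom{m}{2}$ shows that the \emph{average} pairwise entropy is at least $\tfrac{2(m-1)}{m}\log 3$. If every projection had support of size at most $8$, then $H(X_i,X_j)\le \log 8=3$ for every pair; but $\tfrac{2(m-1)}{m}\log 3>3$ as soon as $m\ge 19$, so some pair must satisfy $H(X_i,X_j)>3$, forcing $|\pi_{ij}(\calC)|\ge 2^{H(X_i,X_j)}>8$, i.e. all nine patterns appear. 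This establishes the Claim for all sufficiently large $m$, which is the only regime needed for an asymptotic communication lower bound.

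I expect the main obstacle to be the small-$m$ regime, where the Claim is in fact \emph{false}: for $m\le 6$ the set $\calC=\{\vec{x}\in\{0,1,2\}^m:\text{at most one coordinate equals }0\}$ has $|\calC|=2^{m-1}(m+2)\ge 3^{m-1}$ yet no pair of coordinates realizes the pattern $(0,0)$. Hence a hypothesis of the form $m\ge m_0$ is unavoidable, and the real work is to pin down an acceptable $m_0$. To sharpen the threshold one can first observe that at most two coordinates are \emph{non-full} (a coordinate assuming at most two values on $\calC$ confines $\calC$ to a set of size $2\cdot 3^{m-1}$, while three such coordinates would force $|\calC|\le 2^{3}3^{m-3}=8\cdot 3^{m-3}<3^{m-1}$), and then apply a dimension-reduction induction: splitting $\calC$ according to the value of a single full coordinate produces three parts whose sizes sum to $|\calC|\ge 3^{m-1}$, so one part has at least $3^{m-2}$ elements and, after deleting that coordinate, meets the hypothesis in dimension $m-1$. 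The inductive step is immediate; the genuine crux is supplying a correct base case, which is exactly what the entropy bound above furnishes.
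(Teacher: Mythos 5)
Your proof is correct and, at its core, uses the same contradiction as the paper — ``if every pairwise projection has at most $8$ patterns, then $\calC$ is too small'' — but executes it with a different counting lemma. The paper's own argument is the elementary special case of yours: it assumes $m$ even, partitions the coordinates into $m/2$ disjoint pairs, and bounds $|\calC|\le 8^{m/2}$, contradicting $|\calC|\ge 3^{m-1}$ once $(9/8)^{m/2}>3$, i.e.\ ``for large enough $m$'' (in effect $m\ge 20$, even). Your Shearer bound over all $\binom{m}{2}$ pairs buys two modest improvements: it treats odd and even $m$ uniformly (the paper silently handles only even $m$) and yields the slightly sharper threshold $m\ge 19$; the paper's version, in exchange, needs nothing beyond a product bound. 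Your explicit counterexample (vectors with at most one zero coordinate, valid for $m\le 6$ since $2^{m-1}(m+2)\ge 3^{m-1}$ there) is a genuine addition: it shows the claim as literally stated is \emph{false} for small $m$, so the paper's ``large enough $m$'' hedge reflects a real necessity of the hypothesis rather than slack in the proof — though the restriction is harmless downstream, since (as noted in Sec.~\ref{sec:classical_lower_bd}) a lower bound established at some $\tilde m$ persists for all $m\ge\tilde m$. One quibble: your closing induction (splitting $\calC$ on the value of one coordinate and recursing to dimension $m-1$) is valid but redundant as presented — with the entropy bound supplying the base case at $m_0=19$, it proves nothing the entropy bound does not already give directly for every $m\ge 19$, and it would sharpen the threshold only if you supplied a smaller base case, which you do not; also the ``full coordinate'' restriction there is unnecessary, since splitting on \emph{any} coordinate yields a part of size at least $|\calC|/3\ge 3^{m-2}$, which projects injectively once that coordinate is fixed.
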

\begin{proof}
    Suppose that $m$ is even. Then, we can partition the coordinates as $\{1,2\},\{3,4\}, \dots, \{m-1,m\}$. Let us arrange the sequences in $\calC$ as rows; the number of rows is $|\calC| \geq 3^{m-1}$. Now suppose that the projection onto any pair of coordinates has at most 8 representatives. Then, the size of $\calC$ can be at most $8^{m/2}$. For large enough $m$, we have
    \begin{align*}
        \frac{3^{m-1}}{8^{m/2}} &= \frac{1}{3} \times \left(\frac{9}{8}\right)^{m/2} > 1.
    \end{align*}
\end{proof}

Without loss of generality, we assume that all nine patterns occur within the first two coordinates of $\calC$. We pick 9 such representatives from $\calC$ and denote them $\bfz_{00}, \bfz_{01}, \bfz_{02}, \bfz_{10}, \dots, \bfz_{22}$; the subscript corresponds to the values on the first two coordinates.

Let us pick Alice's sequence $\vec{x}^1[*] = [1 ~ 1~0~\dots~0]$. Corresponding to this $\vec{x}^1[*]$, for the Carol sequences $\bfz_{00}, \dots, \bfz_{22}$, using the promise we can obtain the corresponding Bob sequences $\bfy_{00}, \dots, \bfy_{22}$. We note here that 
\begin{align*}
    \bfy_{00} &= [2~2 -z_{00}(3:m)], \text{~and}\\
    \bfy_{01} &= [2~1 -z_{01}(3:m)], \text{~and}\\
    \bfy_{11} &= [1~1 -z_{11}(3:m)].
\end{align*}
where, e.g., $z_{00}(3:m)$ denotes the components of vector $z_{00}$ from index 3 onwards (basically MATLAB notation). 
\begin{claim}
    In Bob's confusion graph, $G_B$, the sequences $\bfy_{00},\bfy_{01},$ and $\bfy_{11}$ form a triangle.
\end{claim}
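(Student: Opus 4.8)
The plan is to establish each of the three edges $(\bfy_{00},\bfy_{01})$, $(\bfy_{01},\bfy_{11})$, and $(\bfy_{00},\bfy_{11})$ directly from the definition of $E_B$, by exhibiting for each pair a single Alice sequence together with two same-label Carol sequences that produce distinct $GIP$ values. The crucial simplification is that all nine representatives $\bfz_{ab}$ were chosen from $\calC$ and therefore carry the \emph{same} Carol label; hence condition (i) of the edge definition, $\beta(\vec{x}^3[a])=\beta(\vec{x}^3[b])$, holds automatically for any choice among them. This reduces the entire claim to verifying the inequality in condition (ii).

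First I would fix Alice's sequence to be $\vec{x}^1[*]=[1~1~0~\cdots~0]$ throughout and pair each Bob vertex $\bfy_{ab}$ with its matching Carol sequence $\bfz_{ab}$. This pairing guarantees that every triple $(\vec{x}^1[*],\bfy_{ab},\bfz_{ab})$ satisfies the promise \eqref{eq:promise_for_three_users}, since by construction $\bfy_{ab}=-\vec{x}^1[*]-\bfz_{ab}$. Because $\vec{x}^1[*]$ is supported only on the first two coordinates, the $GIP$ evaluation collapses onto those coordinates:
\begin{align*}
GIP(\vec{x}^1[*],\bfy_{ab},\bfz_{ab}) = y_{ab,1}\,z_{ab,1}+y_{ab,2}\,z_{ab,2}.
\end{align*}
Substituting the first-coordinate value $z_{ab,1}=a$, the second-coordinate value $z_{ab,2}=b$, and the promise relations $y_{ab,1}=-1-a$, $y_{ab,2}=-1-b$ yields an expression in $a,b$ alone; over $\bbF_3$ I expect this to evaluate to $0$, $1$, and $2$ for $(a,b)=(0,0),(0,1),(1,1)$ respectively.

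Given these three pairwise-distinct values, each edge follows immediately: $(\bfy_{00},\bfy_{01})\in E_B$ because the Carol sequences $\bfz_{00},\bfz_{01}$ (same label) yield $GIP$ values $0\neq 1$, and the remaining edges $(\bfy_{01},\bfy_{11})$ and $(\bfy_{00},\bfy_{11})$ come from the pairs $1\neq 2$ and $0\neq 2$. I do not anticipate a genuine obstacle, as the argument is a short computation in $\bbF_3$; the one point demanding care is that \emph{both} triples invoked for a given edge must respect the promise, which is exactly why each $\bfy_{ab}$ must be paired with its own $\bfz_{ab}$ rather than with an arbitrary same-label Carol sequence.
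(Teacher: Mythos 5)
Your proposal is correct and follows essentially the same route as the paper: fix $\vec{x}^1[*]=[1~1~0~\dots~0]$, pair each $\bfy_{ab}$ with its own $\bfz_{ab}$ (all sharing one Carol label since they lie in $\calC$), observe that only the first two coordinates contribute, and compute the three $GIP$ values $0,1,2$, whose pairwise distinctness yields all three edges. Your write-up merely makes explicit two points the paper leaves implicit --- that condition (i) holds automatically for representatives of $\calC$, and that the promise forces the $\bfy_{ab}$--$\bfz_{ab}$ pairing --- but the argument is the same.
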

\begin{proof}
    We need to examine $GIP(\vec{x}^1[*],\bfy_i,\bfz_i)$ for $i=00,01,11$. Only the first two coordinates matter since $\vec{x}^1[*] = [1 ~ 1~0~\dots~0]$. The corresponding evaluations are $0,1,2$ which are pairwise different.
\end{proof}
The above 
shows that Bob needs to use at least three labels for Alice to decode with zero-error. By symmetry, Carol needs to use three labels as well. To summarize, the communication complexity of any classical protocol is at least $2 \log 3$ bits.

 \begin{remark}
     It may be possible to use a variant of the above combinatorial argument to establish that the chromatic number of $G_B$ is strictly larger than three. However, this does not seem to follow in a straightforward manner.
 \end{remark}

\begin{table}[t]
\caption{Numerical results.}
\label{tab:table1}
\centering
\begin{tabular}{lccc}
\toprule
\textrm{$m$} & \textrm{$l^b$} & \textrm{$l^c$} & \textrm{Feasibility}\\
\midrule
1 & 1 & 3 & Feasible\\
3 & 1 & 17 & Infeasible\\
2 & 2 & 4 & Infeasible\\
3 & 3 & 3 & Infeasible\\
2 & 3 & 4 & Feasible\\
3 & 5 & 5 & Feasible\\
\bottomrule
\end{tabular}
\end{table}

\subsection{ILP Feasibility Problem for Classical Lower Bound}
\label{sec:classical_lower_bd}
We now present a lower bound on the communication complexity of any classical protocol for our problem. Towards this end we pose this as an integer linear programming problem that can be solved numerically.

Suppose, for $p\in\{2,\dots,n\}$, the $p$-th player  sends symbols (labels) in $[l^p] := \{1,2,\dots ,l^p\}$ for some large enough positive integer $l^p$. 
Let $c \in[l^p]$ and define $I_{\vec{x}^p, c}\in\{0,1\}$ to be the indicator that the $p$-th player sends
 $c$  when it has the vector $\vec{x}^p\in \bbF_n^m$. As this mapping is unique, we have $\sum_{c\in[l^p]} I_{\Vec{x}^p,c} = 1$.  Furthermore, for a given set of vectors $\vec{x}^p$ for $p\in\{2,\dots,n\}$ if the $p$-th player sends label $c^p$, we have $\prod_{p=2}^nI_{\vec{x}^p, c^p} = 1$.
 
Consider two sets of vectors $\{\vec{x}^p\in\bbF_n^m|p\in\{1,\dots,n\}\}$, $\{\vec{z}^p\in\bbF_n^m|p\in\{1,\dots,n\}\}$. We denote $(\vec{x}^1,\dots,\vec{x}^n) \sim_{GIP} (\vec{z}^1,\dots,\vec{z}^n)$
if the following conditions are satisfied.
\begin{enumerate}
    \item Both $(\vec{x}^1,\dots,\vec{x}^n)$ and $(\vec{z}^1,\dots,\vec{z}^n)$ satisfy the promise ({\it cf.} Sec. \ref{sec:promise}).
    \item $\vec{x}^1 = \Vec{z}^1$.
    \item $GIP(\vec{x}^1,\dots, \vec{x}^n)\ne GIP(\vec{z}^1,\dots, \vec{z}^n)$.
\end{enumerate}
This definition  applies to distinct inputs with the "same" Alice vector, but different function evaluations. It can be seen that for two such distinct inputs, the symbols communicated by players 2 to $n$ have to be distinct, otherwise Alice has no way to decode in a zero-error fashion.

Our proposed ILP works with fixed $l^p$'s and a fixed value of $m$. Owing to complexity reasons $m$ cannot be very large. However, we point out that if the ILP is infeasible for given $l^p$'s and a $\tilde{m}$, then our lower bound holds for arbitrary values $m \geq \tilde{m}$. To see this we note that our lower bound would continue to hold even if Alice was provided the values $x^p_{\tilde{m} + 1}, \dots, x^p_{m}$ for all players $p = 2, \dots, n$.

Consider a $0-1$ integer programming feasibility problem:
\begin{equation}\label{eq:optimization}
\begin{aligned}
&\min \quad  0\\
\textrm{s.t.} \quad &p\in\{2,\dots,n\},c\in[l^p],\Vec{x}^p\in\bbF_n^m,\\
&I_{\Vec{x}^p,c} \in \{0,1\},\\
&\sum_{c\in[l^p]} I_{\Vec{x}^p,c} =  1, ~\forall~ \vec{x}^p,\\\
\sum_{c^2\in[l^2],\dots,c^n\in[l^n]} &|\prod_{p=2}^nI_{\vec{x}^p, c^p} - \prod_{p=2}^nI_{\vec{z}^p, c^p}| =2\\
\quad \text{ for all }&(\vec{x}^1,\dots,\vec{x}^n) \sim_{GIP}(\vec{z}^1,\dots,\vec{z}^n).
\end{aligned}
\end{equation}
The infeasibility of the above  problem corresponds to a lower bound on the  classical communication complexity.  The proof of the following theorem appears in Appendix \ref{appdx:thm3}.
\begin{theorem}\label{thm:ILP}
There exists a deterministic classical protocol computing $GIP(\cdot)$ where each player sends at most $l^p$ different labels for $p\in\{2,\dots,n\}$ iff the above integer programming is feasible. 
\end{theorem}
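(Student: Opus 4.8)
The plan is to prove the biconditional by exhibiting an explicit correspondence between deterministic protocols and feasible assignments of the indicator variables $\{I_{\vec{x}^p,c}\}$ appearing in \eqref{eq:optimization}. A deterministic protocol in which player $p$ uses at most $l^p$ labels is specified by encoding functions $f_p:\bbF_n^m\to[l^p]$ for $p\in\{2,\dots,n\}$, where $f_p(\vec{x}^p)$ is the label player $p$ transmits, together with a decoder for Alice. Under the identification $I_{\vec{x}^p,c}=1 \iff f_p(\vec{x}^p)=c$, the first ILP constraint $\sum_{c\in[l^p]}I_{\vec{x}^p,c}=1$ says exactly that each $f_p$ assigns precisely one label to each vector. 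Thus all the content of the theorem lies in matching the zero-error decodability requirement to the second constraint.

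First I would record a combinatorial identity for the product terms. For a fixed vector tuple $(\vec{x}^2,\dots,\vec{x}^n)$, the quantity $\prod_{p=2}^n I_{\vec{x}^p,c^p}$, viewed as a function of $(c^2,\dots,c^n)$, is the indicator of the single transmitted tuple $(f_2(\vec{x}^2),\dots,f_n(\vec{x}^n))$, hence sums to $1$. Consequently, for two vector tuples the expression $\sum_{c^2,\dots,c^n}|\prod_p I_{\vec{x}^p,c^p}-\prod_p I_{\vec{z}^p,c^p}|$ equals $0$ when the two transmitted label tuples coincide and equals $2$ when they differ. Therefore the constraint ``$=2$'' is precisely the statement that players $2,\dots,n$ collectively send different labels on the two inputs.

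For the forward direction ($\Rightarrow$), I would take a zero-error deterministic protocol, read off the $f_p$, and define the $I$'s as above; the first constraint is immediate. For the second, I would invoke correctness: if $(\vec{x}^1,\dots,\vec{x}^n)\sim_{GIP}(\vec{z}^1,\dots,\vec{z}^n)$, then both inputs satisfy the promise, share Alice's vector $\vec{x}^1=\vec{z}^1$, yet have different $GIP$ values. Since Alice sees only $\vec{x}^1$ and the received labels, zero-error correctness forces the two label tuples to differ, as otherwise Alice would produce identical output on inputs with distinct function values; by the identity above this is exactly the ``$=2$'' constraint, so the assignment is feasible.

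For the backward direction ($\Leftarrow$), I would take any feasible $\{I_{\vec{x}^p,c}\}$, recover the $f_p$, and build Alice's decoder: on vector $\vec{x}^1$ and received labels $(c^2,\dots,c^n)$, she outputs the $GIP$ value of any promise-satisfying completion consistent with those labels. The crux is well-definedness, which is the contrapositive of the second constraint: two promise-satisfying inputs sharing Alice's vector and yielding the same label tuple but different function values would constitute a $\sim_{GIP}$ pair violating ``$=2$''. Hence every completion consistent with $(\vec{x}^1,c^2,\dots,c^n)$ shares one function value and Alice decodes with zero error. The step I expect to require the most care is precisely this backward construction: one must consistently restrict to promise-satisfying inputs (as built into the definition of $\sim_{GIP}$) and argue that the decoder need only be correct on such inputs, which is exactly what the problem demands.
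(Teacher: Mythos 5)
Your proposal is correct and follows essentially the same route as the paper: identify $I_{\vec{x}^p,c}$ with the deterministic encoding maps, observe that the sum $\sum_{c^2,\dots,c^n}\bigl|\prod_{p}I_{\vec{x}^p,c^p}-\prod_{p}I_{\vec{z}^p,c^p}\bigr|$ is $2$ or $0$ according to whether the transmitted label tuples differ or coincide, and in the converse direction build Alice's decoder by evaluating $GIP$ on an arbitrary promise-satisfying completion consistent with the received labels, with well-definedness secured by the $\sim_{GIP}$ constraint. Your explicit statement of the $0$-or-$2$ dichotomy as a standalone identity is a slightly cleaner packaging of the computation the paper carries out inline, but the argument is the same.
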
 
\begin{remark}
    The integer program contains constraints that involve the product of variables and equality constraints with sums of absolute values. We show how these constraints can be linearized in Appendix \ref{appendix:linearizing_constr}. The entire code for our ILP is available at this online repository \cite{github_code}.
\end{remark}
\subsection{Numerical experiments} 
In our numerical experiments, we considered an instance of the ILP involving $n=3$ players, namely Alice, Bob, and Carol. Let $m$ represent the length of each vector, while $[l^b],[l^c]$ denote the sets of labels used by Bob and Carol, with $l^b,l^c$ as the sizes of these sets. 

We assume that Alice, Bob, and Carol are given vectors $\vec{x}^1$, $\vec{x}^2$, and $\vec{x}^3$, respectively, each of length $m$. In this case, the promise is given by \eqref{eq:promise_for_three_users}.
It can be observed that swapping the vectors of Bob and Carol continues to satisfy the promise. Due to this inherent symmetry, a protocol with communication lengths $l^b=x$ and $l^c=y$ exhibits the same feasibility as one with $l^b=y$ and $l^c=x$. Consequently, for the ILP we can assume that $l^b \le l^c$.

The experimental results under varying settings of $l^b,l^c,m$ are displayed in TABLE \ref{tab:table1}. It shows for instance, that when $l^b = 1$ and $l^c = 17$, the ILP is infeasible with $m=3$. This implies that for a feasible classical protocol, with $l^b = 1$, we need at least $\log (18)$ bits to be transmitted from Carol. Similarly, the triplets $(m,l^b,l^c) = (2,2,4)$ 
and $(3,3,3)$ are infeasible. This implies that when $l^b$ equals 2 or 3 the sum rate is $ \geq \min(\log 2 + \log 5, \log 3 + \log 4)$. 


Recalling that our proposed protocol employs $2\log(3)$ bits of communication, and by the fact that 
\begin{align*}
  2 \log 3 < \min(\log 18, \log 3 + \log 4, \log 2 + \log 5)  
\end{align*}
we conclude that there is a strict separation between our quantum protocol and any classical protocol.



\newpage

\bibliographystyle{IEEEtran}
\bibliography{ref}

\begin{thebibliography}{10}
\providecommand{\url}[1]{#1}
\csname url@samestyle\endcsname
\providecommand{\newblock}{\relax}
\providecommand{\bibinfo}[2]{#2}
\providecommand{\BIBentrySTDinterwordspacing}{\spaceskip=0pt\relax}
\providecommand{\BIBentryALTinterwordstretchfactor}{4}
\providecommand{\BIBentryALTinterwordspacing}{\spaceskip=\fontdimen2\font plus
\BIBentryALTinterwordstretchfactor\fontdimen3\font minus \fontdimen4\font\relax}
\providecommand{\BIBforeignlanguage}[2]{{%
\expandafter\ifx\csname l@#1\endcsname\relax
\typeout{** WARNING: IEEEtran.bst: No hyphenation pattern has been}%
\typeout{** loaded for the language `#1'. Using the pattern for}%
\typeout{** the default language instead.}%
\else
\language=\csname l@#1\endcsname
\fi
#2}}
\providecommand{\BIBdecl}{\relax}
\BIBdecl

\bibitem{Yao79}
\BIBentryALTinterwordspacing
A.~C.-C. Yao, ``Some complexity questions related to distributive computing(preliminary report),'' in \emph{Proceedings of the Eleventh Annual ACM Symposium on Theory of Computing}, ser. STOC '79.\hskip 1em plus 0.5em minus 0.4em\relax New York, NY, USA: Association for Computing Machinery, 1979, p. 209–213. [Online]. Available: \url{https://doi.org/10.1145/800135.804414}
\BIBentrySTDinterwordspacing

\bibitem{Yao93}
A.~Chi-Chih~Yao, ``Quantum circuit complexity,'' in \emph{Proceedings of 1993 IEEE 34th Annual Foundations of Computer Science}, 1993, pp. 352--361.

\bibitem{HHHH09}
\BIBentryALTinterwordspacing
R.~Horodecki, P.~Horodecki, M.~Horodecki, and K.~Horodecki, ``Quantum entanglement,'' \emph{Rev. Mod. Phys.}, vol.~81, pp. 865--942, Jun 2009. [Online]. Available: \url{https://link.aps.org/doi/10.1103/RevModPhys.81.865}
\BIBentrySTDinterwordspacing

\bibitem{BCMW10}
\BIBentryALTinterwordspacing
H.~Buhrman, R.~Cleve, S.~Massar, and R.~de~Wolf, ``Nonlocality and communication complexity,'' \emph{Rev. Mod. Phys.}, vol.~82, pp. 665--698, Mar 2010. [Online]. Available: \url{https://link.aps.org/doi/10.1103/RevModPhys.82.665}
\BIBentrySTDinterwordspacing

\bibitem{CB97}
\BIBentryALTinterwordspacing
R.~Cleve and H.~Buhrman, ``Substituting quantum entanglement for communication,'' \emph{Phys. Rev. A}, vol.~56, pp. 1201--1204, Aug 1997. [Online]. Available: \url{https://link.aps.org/doi/10.1103/PhysRevA.56.1201}
\BIBentrySTDinterwordspacing

\bibitem{W02}
\BIBentryALTinterwordspacing
R.~{de Wolf}, ``Quantum communication and complexity,'' \emph{Theoretical Computer Science}, vol. 287, no.~1, pp. 337--353, 2002, natural Computing. [Online]. Available: \url{https://www.sciencedirect.com/science/article/pii/S0304397502003778}
\BIBentrySTDinterwordspacing

\bibitem{BBCJPW93}
\BIBentryALTinterwordspacing
C.~H. Bennett, G.~Brassard, C.~Cr\'epeau, R.~Jozsa, A.~Peres, and W.~K. Wootters, ``Teleporting an unknown quantum state via dual classical and einstein-podolsky-rosen channels,'' \emph{Phys. Rev. Lett.}, vol.~70, pp. 1895--1899, Mar 1993. [Online]. Available: \url{https://link.aps.org/doi/10.1103/PhysRevLett.70.1895}
\BIBentrySTDinterwordspacing

\bibitem{BCW98}
\BIBentryALTinterwordspacing
H.~Buhrman, R.~Cleve, and A.~Wigderson, ``Quantum vs. classical communication and computation,'' in \emph{Proceedings of the Thirtieth Annual ACM Symposium on Theory of Computing}, ser. STOC '98.\hskip 1em plus 0.5em minus 0.4em\relax New York, NY, USA: Association for Computing Machinery, 1998, p. 63–68. [Online]. Available: \url{https://doi.org/10.1145/276698.276713}
\BIBentrySTDinterwordspacing

\bibitem{CDNT13}
\BIBentryALTinterwordspacing
R.~Cleve, W.~{van Dam}, M.~Nielsen, and A.~Tapp, ``Quantum entanglement and the communication complexity of the inner product function,'' \emph{Theoretical Computer Science}, vol. 486, pp. 11--19, 2013, theory of Quantum Communication Complexity and Non-locality. [Online]. Available: \url{https://www.sciencedirect.com/science/article/pii/S030439751201105X}
\BIBentrySTDinterwordspacing

\bibitem{BW01}
\BIBentryALTinterwordspacing
H.~Buhrman and R.~de~Wolf, ``Communication complexity lower bounds by polynomials,'' in \emph{Proceedings of Computational Complexity. Sixteenth Annual IEEE Conference}.\hskip 1em plus 0.5em minus 0.4em\relax Los Alamitos, CA, USA: IEEE Computer Society, jun 2001, p. 0120. [Online]. Available: \url{https://doi.ieeecomputersociety.org/10.1109/CCC.2001.933879}
\BIBentrySTDinterwordspacing

\bibitem{R03}
\BIBentryALTinterwordspacing
A.~A. Razborov, ``Quantum communication complexity of symmetric predicates,'' \emph{Izvestiya: Mathematics}, vol.~67, no.~1, p. 145, feb 2003. [Online]. Available: \url{https://dx.doi.org/10.1070/IM2003v067n01ABEH000422}
\BIBentrySTDinterwordspacing

\bibitem{K07}
\BIBentryALTinterwordspacing
H.~Klauck, ``Lower bounds for quantum communication complexity,'' \emph{SIAM Journal on Computing}, vol.~37, no.~1, pp. 20--46, 2007. [Online]. Available: \url{https://doi.org/10.1137/S0097539702405620}
\BIBentrySTDinterwordspacing

\bibitem{VH02}
W.~van Dam and P.~Hayden, ``Renyi-entropic bounds on quantum communication,'' 2002.

\bibitem{MT20}
\BIBentryALTinterwordspacing
A.~Marwah and D.~Touchette, ``Optical quantum communication complexity in the simultaneous-message-passing model,'' \emph{Phys. Rev. A}, vol. 102, p. 062608, Dec 2020. [Online]. Available: \url{https://link.aps.org/doi/10.1103/PhysRevA.102.062608}
\BIBentrySTDinterwordspacing

\bibitem{LS22}
F.~Le~Gall and D.~Suruga, ``Bounds on oblivious multiparty quantum communication complexity,'' in \emph{LATIN 2022: Theoretical Informatics}, A.~Casta{\~{n}}eda and F.~Rodr{\'i}guez-Henr{\'i}quez, Eds.\hskip 1em plus 0.5em minus 0.4em\relax Cham: Springer International Publishing, 2022, pp. 641--657.

\bibitem{FS18}
\BIBentryALTinterwordspacing
F.~L. Gall and S.~Nakajima, ``{Multiparty Quantum Communication Complexity of Triangle Finding},'' in \emph{12th Conference on the Theory of Quantum Computation, Communication and Cryptography (TQC 2017)}, ser. Leibniz International Proceedings in Informatics (LIPIcs), M.~M. Wilde, Ed., vol.~73.\hskip 1em plus 0.5em minus 0.4em\relax Dagstuhl, Germany: Schloss Dagstuhl--Leibniz-Zentrum fuer Informatik, 2018, pp. 6:1--6:11. [Online]. Available: \url{http://drops.dagstuhl.de/opus/volltexte/2018/8579}
\BIBentrySTDinterwordspacing

\bibitem{LSS09}
T.~Lee, G.~Schechtman, and A.~Shraibman, ``Lower bounds on quantum multiparty communication complexity,'' in \emph{2009 24th Annual IEEE Conference on Computational Complexity}, 2009, pp. 254--262.

\bibitem{BBLV09}
\BIBentryALTinterwordspacing
J.~Briet, H.~Buhrman, T.~Lee, and T.~Vidick, ``Multiplayer xor games and quantum communication complexity with clique-wise entanglement,'' 2009. [Online]. Available: \url{https://arxiv.org/abs/0911.4007}
\BIBentrySTDinterwordspacing

\bibitem{BVHT99}
\BIBentryALTinterwordspacing
H.~Buhrman, W.~van Dam, P.~H\o{}yer, and A.~Tapp, ``Multiparty quantum communication complexity,'' \emph{Phys. Rev. A}, vol.~60, pp. 2737--2741, Oct 1999. [Online]. Available: \url{https://link.aps.org/doi/10.1103/PhysRevA.60.2737}
\BIBentrySTDinterwordspacing

\bibitem{XLZG01}
\BIBentryALTinterwordspacing
P.~Xue, C.-F. Li, Y.-S. Zhang, and G.-C. Guo, ``Three-party quantum communication complexity via entangled tripartite pure states,'' \emph{Journal of Optics B: Quantum and Semiclassical Optics}, vol.~3, no.~4, p. 219, jul 2001. [Online]. Available: \url{https://dx.doi.org/10.1088/1464-4266/3/4/304}
\BIBentrySTDinterwordspacing

\bibitem{G01}
\BIBentryALTinterwordspacing
E.~F. Galv\~ao, ``Feasible quantum communication complexity protocol,'' \emph{Phys. Rev. A}, vol.~65, p. 012318, Dec 2001. [Online]. Available: \url{https://link.aps.org/doi/10.1103/PhysRevA.65.012318}
\BIBentrySTDinterwordspacing

\bibitem{S56}
C.~Shannon, ``The zero error capacity of a noisy channel,'' \emph{IRE Transactions on Information Theory}, vol.~2, no.~3, pp. 8--19, 1956.

\bibitem{github_code}
\BIBentryALTinterwordspacing
``{Python code for ILP in Sec. V}.'' [Online]. Available: \url{https://github.com/mengruoyu/ILP}
\BIBentrySTDinterwordspacing

\end{thebibliography}

\newpage

\appendices

\section{Proof of Lemma \ref{lemma:qft_lemma}
\label{appendix:proof_lemma_1}}
Recall that $$\vec{\alpha}=[1,\dots,1]^T.$$
The action of $\text{QFT}^{\otimes n}$ on  $\frac{1}{\sqrt{n}}\sum_{j=0}^{n-1} \omega^{-xj}\ket{j\cdot \vec{\alpha}}$ is 
    \begin{align*}
 \frac{1}{n^{\frac{n}{2}}}\sum_{\vec{k}\in\bbF_n^n}&a_{\vec{k}}\ket{\vec{k}}=QFT^{\otimes n}\left(\frac{1}{\sqrt{n}}\sum_{j=0}^{n-1} \omega^{-xj}\ket{j\cdot \vec{\alpha}}\right)
\\& = 
 \frac{1}{n^{\frac{n}{2}}}\sum_{\vec{k}\in\bbF_n^n} \left(\frac{1}{\sqrt{n}}\sum_{j=0}^{n-1} \omega^{-xj}\cdot \omega^{\langle \vec{k},j\cdot \vec{\alpha} \rangle}\right)\ket{\vec{k}}.
\end{align*}
Write $\Vec{k}=[k_1,\dots,k_n]^T$. Therefore, $$a_{\vec{k}}=\frac{1}{\sqrt{n}}\sum_{j=0}^{n-1} \omega^{-xj}\cdot \omega^{\langle \vec{k},j\cdot \vec{\alpha}\rangle} = \frac{1}{\sqrt{n}}\sum_{j=0}^{n-1} \omega^{-xj+ j \sum_{l=1}^nk_l}.$$
When $\vec{k}$ satisfies $\sum_{l=1}^nk_l=x$, we have that $a_{\vec{k}} = \sqrt{n}\ne 0$. Otherwise, since $-x+\sum_{l=1}^nk_l\ne 0$ and $n$ being prime, $1-\omega^{n(-x+\sum_{l=1}^nk_l)}=0$ and $1-\omega^{-x+\sum_{l=1}^nk_l}\ne0$. We have
\begin{align*}
&\frac{1}{\sqrt{n}}\sum_{j=0}^{n-1} \omega^{-xj+ j \sum_{l=1}^nk_l}=\frac{1}{\sqrt{n}}\sum_{j=0}^{n-1}\omega^{(-x+\sum_{l=1}^nk_l) j} 
\\& = \frac{1}{\sqrt{n}}\frac{1-\omega^{n(-x+\sum_{l=1}^nk_l)}}{1-\omega^{-x+\sum_{l=1}^nk_l}} =0.
\end{align*}

\section{Derivation of equation (\ref{eqn:computej}) and \label{lemma2} equation (\ref{eqn:computeaj+b0...n-1})}
We derive equation (\ref{eqn:computej}) by considering two cases. The first case is that $[x_i^1,\dots,x_i^n]^T = [0,\dots,0]^T$. Then, we have
\begin{equation}\label{compute0}
    \begin{split}
        &P_0^{\otimes n}\left(\frac{1}{\sqrt{n}}\sum_{k=0}^{n-1}\ket{k\cdot \vec{\alpha}} \right) \mapsto \frac{1}{\sqrt{n}}\omega^{-\frac{n(n-1)}{2}}\ket{0\cdot \vec{\alpha}}\\&+\frac{1}{\sqrt{n}}\sum_{k=1}^{n-1}\ket{k\cdot \vec{\alpha}}=\frac{1}{\sqrt{n}}\sum_{k=0}^{n-1}\ket{k\cdot \vec{\alpha}}=\frac{1}{\sqrt{n}}\sum_{k=0}^{n-1}\omega^{-k \cdot 0}\ket{k\cdot \vec{\alpha}}. 
    \end{split}
\end{equation}
 The second case is that  $[x_i^1,\dots,x_i^n]^T = [j,\dots,j]^T$ with $j\ne 0$. Then, we have
\begin{equation}\label{computej}
    \begin{split}
&P_j^{\otimes n} \left(\frac{1}{\sqrt{n}}\sum_{k=0}^{n-1}\ket{k\cdot \vec{\alpha}} \right) \mapsto \frac{1}{\sqrt{n}}\sum_{k=0}^{n-1}\omega^{-n(\frac{1}{n} (kj\text{ mod }n))}\ket{k\cdot \vec{\alpha}}\\& =
\frac{1}{\sqrt{n}}\sum_{k=0}^{n-1}\omega^{-k{j}}\ket{k\cdot \vec{\alpha}}
    \end{split}
\end{equation}
where the last equality holds since $\omega^{-kj} = \omega^{- (kj \text{~mod~} n)}$. Thus, in this case collectively, we can express the joint state after phase-shifting as $\frac{1}{\sqrt{n}}\sum_{k=0}^{n-1}\omega^{-kj}\ket{k\cdot \vec{\alpha}}$.

Now we derive equation (\ref{eqn:computeaj+b0...n-1}). We have 
\begin{align*}
&P_0\otimes \dots \otimes P_{n-1} \left(\frac{1}{\sqrt{n}}\sum_{k=0}^{n-1}\ket{k\cdot \vec{\alpha}} \right) \mapsto\\& 
\frac{1}{\sqrt{n}}\omega^{-\frac{n-1}{2}} \ket{0\cdot \vec{\alpha}}+ \frac{1}{\sqrt{n}}\sum_{k=1}^{n-1}\omega^{-\sum_{j=0}^{n-1} \frac{1}{n}(kj\text{ mod }n)}\ket{k\cdot \vec{\alpha}}\\
&\overset{(a)}{=} 
\frac{1}{\sqrt{n}}\omega^{-\frac{n-1}{2}} \ket{0\cdot \vec{\alpha}}+ \frac{1}{\sqrt{n}}\sum_{k=1}^{n-1}\omega^{- \frac{1}{n}(\sum_{j=0}^{n-1}j)}\ket{k\cdot \vec{\alpha}}\\
& = 
\frac{1}{\sqrt{n}}\omega^{-\frac{n-1}{2}} \ket{0\cdot \vec{\alpha}}+ \frac{1}{\sqrt{n}}\sum_{k=1}^{n-1}\omega^{- \frac{1}{n} [\frac{n(n-1)}{2} ]}\ket{k\cdot \vec{\alpha}}
\\&=\frac{1}{\sqrt{n}} \omega^{-\frac{n-1}{2}}\sum_{k=0}^{n-1}\ket{k\cdot \vec{\alpha}} 
\end{align*}

where $(a)$ follows from the fact that  $\{kj\text{ mod }n| j\in \{0,\dots,n-1\}\}=\{0,\dots,n-1\}$ for $k\ne 0$. It follows that
\begin{align*}
&QFT^{\otimes n}\left(\frac{1}{\sqrt{n}}\omega^{-\frac{n-1}{2}}\sum_{k=0}^{n-1}\ket{k\cdot \vec{\alpha}}\right)\\&=\frac{1}{\sqrt{n}}\omega^{-\frac{n-1}{2}}QFT^{\otimes n}\left(\sum_{k=0}^{n-1}\ket{k\cdot \vec{\alpha}}\right)\\&
=\frac{1}{\sqrt{n}}\omega^{-\frac{n-1}{2}}\sum_{\vec{k}\in\bbF_n^n}a_{\vec{k}}\ket{\vec{k}}.    
\end{align*}

\section{Proof of Theorem \ref{thm:classical_protocol}}
\label{appendix:proof_classical_protocol}
For $a,b\in\bbF_n$, we define $M_{a,b} = \{i\in[m]|[x^1_i,\dots ,x^n_i]^T =a[1,\dots ,1]^T+b[0,1,\dots,n-1]^T\}$ and $m_{a,b} = |M_{a,b}|$. Since the promise ensures that, for each $i\in\{1,\dots,m\}$, there exists $a,b\in\bbF_n$ s.t.
$[x^1_i,\dots ,x^n_i]^T =a[1,\dots ,1]^T+b[0,1,\dots,n-1]^T\in \bbF_n^n$, we have that $\{M_{a,b}|a,b\in\bbF_n\}$ forms a partition of the set $\{1,\dots,m\}.$

When $i\in M_{a,0}$, i.e. $[x^1_i,\dots ,x^n_i]^T = a  [1,\dots,1]^T$, we have 
\begin{equation}
    \label{compute3}
    \prod_{p=1}^nx_i^p = a^n = a.
\end{equation} 
Otherwise, we  have $i\in M_{a,b}$ for some $b\ne 0$, so $[x^1_i,\dots ,x^n_i]^T = a [1,\dots,1 ]^T+b[0,\dots,n-1]^T$. Then, 
\begin{equation}
    \label{compute4}
    \prod_{p=1}^nx_i^p = \prod_{i=0}^{n-1}(a+i\cdot b) =0.
\end{equation}
By (\ref{compute3}) and (\ref{compute4}), if $i\in M_{a,b}$, then $\prod_{p=1}^nx_i^p = \delta_{0b}\cdot a.$ Define 
$$
\mathds{1}((x^1_i,\dots ,x^n_i), M_{a,b}) = \begin{cases}
1, & i\in M_{a,b}\\
0, & \text{otherwise},
\end{cases}
$$
i.e., it is the indicator of $i\in M_{a,b}$. 
Since $i\in M_{a,b}$ for exactly one choice of $(a,b)$, it follows that
\begin{equation}
    \label{compute5}
    \prod_{p=1}^nx_i^p = \delta_{0b}\cdot a=\sum_{a,b=0}^{n-1} \delta_{0b}\cdot a \cdot  \mathds{1}((x^1_i,\dots ,x^n_i), M_{a,b}).
\end{equation}
We have
\begin{equation}\label{classical_protocol}
\begin{split}
&\sum_{i=1}^m\prod_{p=1}^nx_i^p\\
\overset{\text{(*)}}{=}&\sum_{i=1}^m\sum_{a,b=0}^{n-1}a \cdot \delta_{0b}\cdot  \mathds{1}((x^1_i,\dots ,x^n_i), M_{a,b})\\ 
=&\sum_{a,b=0}^{n-1}\sum_{i=1}^m
a \cdot \delta_{0b} \cdot \mathds{1}((x^1_i,\dots ,x^n_i), M_{a,b})\\ 
=& \sum_{a,b=0}^{n-1}a\cdot  \delta_{0b} \cdot m_{a,b}
=  \sum_{a=0}^{n-1}a\cdot  m_{a,0}
\text{ (mod}\ n\text{)}.
\end{split}
\end{equation}
Here, (*) follows from \eqref{compute5}. Our next step is to show $\sum_{a=0}^{n-1}a m_{a,0} = W/n
\text{ (mod}\ n\text{)}$; $W$ is defined in Alg. \ref{alg:classical_protocol}.

Suppose $i\in M_{a,b}$, then $(x^1_i,\dots ,x^n_i) = a (1,\dots,1) +b(0,\dots,n-1)$. For the $p-$th player, $a+ (p - 1) b$ is the value of the $i-$th coordinate of her vector $\vec{x}^p$. For a fixed $k\in \bbF_n$, the set of $(a,b)$ s.t. $a + (p - 1) b = k$ is $\{(k,0),(k+p-1,-1),(k+2p-2,-2),\dots,(k+(n-1)(p-1),-(n-1)\}$. It follows that $\forall p\in\{1,\dots,n\},k\in\bbF_n,$

\begin{equation}\label{relation4}
\beta_k^p = \sum_{i=0}^{n-1}m_{k + i(p-1), -i}.    
\end{equation}
Consider $\sum_{i=0}^{n-1}\sum_{p=1}^{n}m_{k + i(p-1), -i}$. When $i=0$, $m_{k+i(p-1),-i}$ is counted $n$ times. Denote $S = \{(0,0),\dots,(n-1,0)\}.$ For arbitrary $
[x,y]^T
\in\bbF_n^2-S$, the equation 
\begin{align*}
\begin{bmatrix}
k+i(p-1)\\-i
\end{bmatrix}
&=
\begin{bmatrix}
x\\
y
\end{bmatrix}
\end{align*}
has a unique solution  given by 
\begin{align*}
i &= -y, \text{~and}\\
p &= \frac{y-x+k}{y}.
\end{align*}
Thus, we have
$$
\bbF_n^2-S=\{(k+i(p-1),-i~|~p\in\{1,\dots,n\},i\in\{1,\dots,n-1\}\}.    
$$
 Therefore, when $i\ne 0$, $m_{k+i(p-1),-i}$ is counted exactly once. 
It follows that
\begin{equation}\label{relation2}
\begin{split}
   & \sum_{p=1}^{n}\sum_{i=0}^{n-1}m_{k + i(p-1), -i}\\
= & \sum_{p=1}^{n}\sum_{i=1}^{n-1}m_{k + i(p-1), -i}  +\sum_{p=1}^{n}m_{k + 0(p-1), 0}\\
= & \sum_{a,b\in \bbF_n^2-S}m_{a,b}  +n\cdot m_{k, 0}.
\end{split}
\end{equation}Now we have
\begin{align*}
\sum_{p=1}^n\sum_{k=1}^{n-1}k\cdot \beta_k^p 
\overset{\text{\eqref{relation4}}}{=}& \sum_{k=1}^{n-1}k\sum_{p=1}^n\sum_{i=0}^{n-1}m_{k + i\cdot (p-1), -i}\\
\overset{\text{\eqref{relation2}} }{=} &
\sum_{k=1}^{n-1}k\Big[ \sum_{a,b\in \bbF_n^2-S}m_{a,b}  +n\cdot m_{k, 0}\Big]\\
= &
\frac{n^2-n}{2}\sum_{a,b\in \bbF_n^2-S}m_{a,b}+n\sum_{k=1}^{n-1}k\cdot m_{k,0}\\
\text{and }
\sum_{p=1}^n \beta_0^p 
\overset{\text{\eqref{relation4}}}{=} &\sum_{p=1}^n\sum_{i=0}^{n-1}m_{i\cdot (p-1), -i}
\overset{\text{\eqref{relation2}}}{=}\sum_{a,b\in \bbF_n^2-S}m_{a,b}  +n\cdot m_{0, 0}.
\end{align*}

Note $n>2$ is prime, so $n-1$ is divisible by 2. It follows that 
\begin{align*}
W=&\sum_{p=1}^n\sum_{k=1}^{n-1}k\cdot \beta_k^p + \frac{n^2-n}{2}\cdot(n-1) \sum_{p=1}^n \beta_0^p \\
=&\frac{n^2-n}{2}\sum_{a,b\in \bbF_n^2-S}m_{a,b}+n\sum_{k=1}^{n-1}k\cdot m_{k,0} 
\\&+  \frac{n^2-n}{2}\cdot(n-1) \Big [\sum_{a,b\in \bbF_n^2-S}m_{a,b}  +n\cdot m_{0, 0}\Big ]\\
=& n\sum_{k=1}^{n-1}k\cdot m_{k , 0} + n^2\cdot \frac{(n-1)}{2} \sum_{a,b\in \bbF_n^2-S} m_{a,b}\\
&+n^2\cdot\frac{(n-1)^2}{2}\cdot m_{0, 0}\\
=&n\sum_{k=0}^{n-1}k\cdot  m_{k , 0}\text{ (mod}\ n^2\text{)}.
\end{align*}

Divide both sides by $n$ and we get $W/n=\sum_{k=1}^{n-1}km_{k , 0}\text{ (mod}\ n\text{)}$. Now we are done because of (\ref{classical_protocol}).
\section{Proof of theorem \ref{thm:ILP}}\label{appdx:thm3}
Suppose that we have a protocol that computes the function with zero-error. Our protocol is deterministic, so, for each $\Vec{x}^p\in \bbF_n^m$, it associates exactly one label $\hat{c}$ such that the $p$-th player sends $\hat{c}$ if his vector is $\Vec{x}^p$. We set $I_{\Vec{x}^p,\hat{c}}=1$ and $I_{\Vec{x}^p,c}=0$ for all $c\ne \hat{c}.$ Therefore, $I_{\Vec{x}^p,c}\in \{0,1\}$ and $\sum_{c\in[l]} I_{\Vec{x}^p,c} = 1$ are satisfied for all choice of $\Vec{x}^p,c$.

Next, suppose we have that $(\vec{x}^1,\dots,\vec{x}^n)\sim_{GIP}(\vec{z}^1,\dots,\vec{z}^n)$. Furthermore, assume that the $p$-th player sends $s^p/t^p$ for $\vec{x}^p/\vec{z}^p$ for $p=2, \dots, n$. This implies that $\prod_{p=2}^nI_{\vec{x}^p, s^p} = 1$ and that $\prod_{p=2}^nI_{\vec{z}^p, t^p} = 1$. In addition, note that $\exists ~p$ such that $s^p \neq t^p$ as for these sequences, the symbols transmitted from users $2, \dots, n$ have to be distinct.

Thus we have 
\begin{align*}
1 = |\prod_{p=2}^nI_{\vec{x}^p, s^p}-\prod_{p=2}^nI_{\vec{z}^p, s^p}| = |\prod_{p=2}^nI_{\vec{x}^p, t^p}-\prod_{p=2}^nI_{\vec{z}^p, t^p}|
\end{align*}
and consequently
\begin{align*}
&\sum_{c_2\in[l^2],\dots,c_n\in[l^n]}|\prod_{p=2}^nI_{\vec{x}^p, c^p} - \prod_{p=2}^nI_{\vec{z}^p, c^p}|=\\
&|\prod_{p=2}^nI_{\vec{x}^p, \hat{s}^p} - \prod_{p=2}^nI_{\vec{z}^p, \hat{s}^p}|+|\prod_{p=2}^nI_{\vec{x}^p, \hat{t}^p} - \prod_{p=2}^nI_{\vec{z}^p, \hat{t}^p}|=2.  
\end{align*}
Therefore, the third constraint is satisfied.

Conversely, if we have 
 $\{I_{\Vec{x}^p,c}|\Vec{x}^p\in\bbF_n^m,c\in [l^p]\}$ that satisfy the constraints of the ILP, then we construct a classical protocol as follows. Suppose $p$-th player has vector $\Vec{x^p}$ for $p\in\{1,\dots,n\}$. Since there exists exactly one $s^p\in[l^p]$ s.t. $I_{x^p,s^p}=1$, then $p$ sends $s^p$ to Alice for $p\in\{2,\dots,n\}$. When Alice receives the symbols $s^p, p = 2, \dots, n$ from the other players she picks arbitrary $\{\vec{y}^p\in \bbF_n^m\}_{i=2}^n$ s.t. $(\Vec{x}^1,\vec{y}^2,\dots ,\vec{y}^n)$ satisfies the promise and $\forall p\in\{2,\dots,n\},I_{\vec{y}^p, s^p} = 1$.
 Then she outputs $f(\Vec{x}^1,\vec{y}^2,\dots ,\vec{y}^n)$. In what follows we show that 
$$
GIP(\Vec{x}^1,\vec{y}^2,\dots ,\vec{y}^n) = GIP(\Vec{x}^1,\vec{x}^2,\dots ,\vec{x}^n).
 $$
To see this assume otherwise. Then, we have
$GIP(\Vec{x}^1,\vec{y}^2,\dots ,\vec{y}^n) \ne GIP(\Vec{x}^1,\vec{x}^2,\dots ,\vec{x}^n).$ It follows that
$$
(\Vec{x}^1,\vec{x}^2,\dots ,\vec{x}^n)\sim_{GIP}(\Vec{x}^1,\vec{y}^2,\dots ,\vec{y}^n).
$$
Owing to the third constraint, we have that 
$$
\sum_{c^2,\dots,c^n\in[l]}|\prod_{p=2}^nI_{\vec{x}^p, c^p} - \prod_{p=2}^nI_{\vec{y}^p, c^p}|=2.
$$
However, we have that   $I_{\vec{x}^i, s^i}=I_{\vec{y}^i, s^i} = 1$ for all $i\in\{2,\dots,n\}$. By the first and second constraint, we have that $I_{\vec{x}^i, c^i}=I_{\vec{y}^i, c^i} = 0$ for all $i\in\{2,\dots,n\}$ and $c^i\ne s^i$. Therefore, $$
\sum_{c^2,\dots,c^n\in[l]}|\prod_{p=2}^nI_{\vec{x}^p, c^p} - \prod_{p=2}^nI_{\vec{y}^p, c^p}|=0.
$$
This gives the desired contradiction.

\section{Linearizing constraints in the integer programming problem}
\label{appendix:linearizing_constr}
One issue with the optimization problem in \eqref{eq:optimization} is that the third constraint has multiple absolute value and products of variables. Here we transform the constraints and add extra variable in \eqref{eq:optimization} to get the desired ILP. 

Our first step is to introduce auxiliary 0-1 variables that correspond to the product of other 0-1 variables. For instance, it can be verified that we can handle $\prod_{i=1}^k x_i = x'$ as follows.
\begin{align}
   x' &\leq x_i, \text{~for~} i = 1, \dots, k \\
   x' &\geq \sum_{i=1}^k x_i - (k-1).
\end{align}

As a first step we introduce such auxiliary variable for all terms that involve products of our indicator function in \eqref{eq:optimization}.

Following this step, we are left with handling constraints that involve sums of absolute values of differences. For this step we show how to replace each absolute value difference by another auxiliary variable. In particular, we can replace $|x-y|$ by $z$ as follows.

\begin{align*}
|x-y| = |x-y|^2
= x^2 + y^2 - 2xy
=x + y - 2xy
\end{align*}
where the last step follows from the fact that the variables are of type $0-1$. The product term $2xy$ can be linearized as described previously. Following these steps, all constraints in the integer programming problem are linear.

\end{document}